\newcommand{\T}{\mathcal{T}}
\newcommand{\VV}{\mathcal{V}}
\newcommand{\F}{\mathcal{F}}
\newcommand{\CC}{C\nolinebreak\hspace{-.05em}\raisebox{.4ex}{\tiny\bf +}\nolinebreak\hspace{-.10em}\raisebox{.4ex}{\tiny\bf +}}
\def\CC{{C\nolinebreak[4]\hspace{-.05em}\raisebox{.4ex}{\tiny\bf ++}}}
\newcommand{\dP}{\mathcal{P}}
\newcommand{\D}{\partial}
\newcommand{\DD}{\mathcal{D}}
\newcommand{\sumd}{\tau}
\newtheorem{definition}{Definition}[section]
\newtheorem{theorem}{Theorem}[section]
\newtheorem{corollary}{Corollary}[section]
\title{Automatic Differentiation: a look through Tensor and Operational Calculus}
\author[1]{Žiga Sajovic}
\affil[1]{\footnotesize XLAB d.o.o.\authorcr ziga.sajovic@xlab.si}
\date{}
\begin{document}

\maketitle

\begin{abstract}
  In this paper we take a look at \emph{Automatic Differentiation} through the eyes of \emph{Tensor} and \emph{Operational Calculus}. This work is best consumed as supplementary material for learning tensor and operational calculus by those already familiar with automatic differentiation. To that purpose, we provide a simple implementation of automatic differentiation, where the steps taken are explained in the language of tensor and operational calculus.
\end{abstract} 

\section{Introduction}\label{sec:introduction}

In this paper we take a look at \emph{Automatic Differentiation} through the eyes of \emph{Tensor} and \emph{Operational Calculus}. This work is best consumed as supplementary material for learning tensor and operational calculus by those already familiar with automatic differentiation. We first note the difference between the two concepts: \emph{automatic differentiation} is a name of a collection of techniques used for efficient computation of derivatives of differentiable programs, while \emph{operational calculus} is a technique that enables reasoning about analytic properties of differentiable programs through an algebra of higher order constructs. Thus, as rates of change (derivatives) are analytic properties, and automatic differentiation is used to compute them, we can explain automatic differentiation through the language of tensor and operational calculus. Note that this work will only cover as much tensor and operational calculus as needed for explanations of the implementation they are accompanying. This means that we will not delve into the many uses of operational calculus for differentiable programming that scope beyond describing automatic differentiation. As such, employment of operational calculus for reasoning about differentiable programs, ie. analysing fractional iterations, derivations of differentiable fixed point combinators, etc., will not be covered in this work (but can be found in \cite{OperationalCalculus}).

We provide a simple implementation of the memory space $\VV$ and its expansion to $\VV\otimes T(\VV^*)$, where $T(\VV^*)$ is the tensor algebra of its dual, serving by itself as an \emph{algebra of programs} \cite[Definition~4.1]{OperationalCalculus}. We than define the concept of a \emph{differentiable programming space}, implement the operator $\tau_n$, that increases the order of a differentiable programming space \cite[Proposition~5.1]{OperationalCalculus} and enable \emph{differentiable derivatives}. The operator $\tau_n$ is than used to implement an illustrative example of a differentiable programming space of arbitrary order.
Complete source code can be found on GitHub \cite{dCpp}.

\section{Virtual memory space}\label{sec:virtualMemory}

We model an element of the virtual memory space $v\in\VV_n$ \cite[Definition~4.1]{OperationalCalculus}
\begin{eqnarray}
\VV_{n}=\VV_{n-1}\oplus(V_{n-1}\otimes\VV^*) \label{eq:V_n}
\end{eqnarray}
\begin{equation}
\VV_{n}=\VV\oplus\VV\otimes\VV^*\oplus\cdots\oplus\VV\otimes\VV^{*n\otimes} \implies\VV_n=\VV\otimes T(\VV^*)
\end{equation}
with the class $var$.

\begin{lstlisting}
template<class V>
class var
{
    public:
    	int order;
        V id;
        std::shared_ptr<std::map<var*, var> >* dTau;

        var();
        var(V id);
        var(const var& other);
        ~var();
        void init(int order);
        var d(var* dVar);
        //declerations of algebraic operations
        //declerations of order logic
};
\end{lstlisting}

The virtual memory space $\VV_n$ is the tensor product of the virtual memory $\VV$ with the tensor algebra of its dual. Tensor products of the virtual memory $\VV$ with its dual are modeled using maps, denoted by \emph{dTau}. The address \emph{var*} stands for the component of $v\in\VV_{n-1}$ with which the tensor product of the component of $v^*\in\VV^*$ was computed to generate $v\in\VV_n$ in equation \eqref{eq:V_n}. This depth is contained in the \emph{int order}.

 \begin{theorem}
 An instance of the class $var$ is an element of the virtual memory $\VV_n$.
  \begin{equation}
  var\in\VV_n
  \end{equation}
 \end{theorem}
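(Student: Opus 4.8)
The plan is to show that the recursive
structure of the class $var$ mirrors the recursive definition of the virtual
memory space given in equation \eqref{eq:V_n}, and then to conclude by
induction on the order $n$. The claim is essentially a correspondence theorem:
I must exhibit how each data member of an instance of $var$ realizes a summand
of the decomposition $\VV_n = \VV \oplus \VV\otimes\VV^* \oplus \cdots \oplus
\VV\otimes\VV^{*n\otimes}$, and verify that the constructors and the member
$d$ respect this identification.

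First I would set up the base case. When the field \emph{order} equals zero,
the instance carries only its value \emph{id} of type \emph{V} and an empty (or
null) map \emph{dTau}, so it represents a bare element $v\in\VV=\VV_0$. This
establishes $var\in\VV_0$. Next, for the inductive step I would assume that any
instance of \emph{order} $n-1$ is an element of $\VV_{n-1}$, and examine an
instance of \emph{order} $n$. The key observation, already supplied by the text
preceding the statement, is that the map \emph{dTau} models the tensor product
$\VV_{n-1}\otimes\VV^*$: each key \emph{var*} names the component of
$v\in\VV_{n-1}$ against which a dual component $v^*\in\VV^*$ was paired, and the
associated stored \emph{var} carries that paired data. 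Thus an instance of
order $n$ decomposes into its underlying order-$(n-1)$ part together with the
tensor-product part encoded by \emph{dTau}, which is exactly the direct-sum
decomposition $\VV_n = \VV_{n-1}\oplus(\VV_{n-1}\otimes\VV^*)$ of
equation \eqref{eq:V_n}. Invoking the inductive hypothesis on the order-$(n-1)$
part and identifying \emph{dTau} with the second summand completes the step,
and unfolding the recurrence yields the fully expanded form
$\VV_n=\VV\otimes T(\VV^*)$.

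The main obstacle I anticipate is making the identification between the C++
\emph{std::map} and the mathematical tensor product $\VV_{n-1}\otimes\VV^*$
fully precise, rather than merely suggestive. A map is a finitely supported
association from keys to values, so it naturally models a \emph{finite} linear
combination of simple tensors $v\otimes v^*$; I would need to argue that the key
set (the addresses \emph{var*}) furnishes a basis-like indexing of the relevant
dual components, and that the finitely-supported nature of the map is exactly
what one wants, since any concrete element of a tensor product is a finite sum
of simple tensors. I would also want to check that the recursion terminates and
that \emph{order} is well-defined and consistent across the nested structure, so
that the induction is genuinely on a finite quantity. Given the informal,
implementation-oriented framing of the paper, I expect the argument to be
carried out at the level of a structural correspondence rather than a formal
isomorphism of vector spaces, so the emphasis will be on exhibiting the
bijection between the fields of \emph{var} and the summands of $\VV_n$ and
checking that it is preserved by the class's operations.
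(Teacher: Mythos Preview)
Your proposal is correct and follows essentially the same structural-correspondence idea as the paper: identify \emph{id} with the $\VV$-summand and \emph{dTau} with the higher tensor part, then invoke the recursive definition \eqref{eq:V_n}. The paper's own proof is far terser---it simply asserts $id\in\VV$, $dTau\in\VV_{n-1}$, and $var=id\oplus dTau\implies var\in\VV_n$ without spelling out the induction or the map/tensor-product identification---so your version is a strictly more careful elaboration of the same argument rather than a different route.
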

 \begin{proof}
 \begin{equation}
  id\in \VV\land  dTau\in \VV_{n-1}
  \end{equation}
  $$\land$$
  \begin{equation}
  var=id\oplus dTau\implies var\in\VV_n
  \end{equation}
 \end{proof}

\subsection{Initialization}

A constant element $v_0$ of the virtual memory is an element of $\VV_0=\VV<\VV_n$. We initialize an element to be $n$-differentiable, by mapping
\begin{equation}
init:\VV\times\mathbb{N}\to\VV_n,
\end{equation}
\begin{equation}
v_0.init(n)=v_n\in\VV_n
\end{equation}
The image $v_n$ is an element of $\VV_1=\VV\otimes\VV^*$ with a natural inclusion in $\VV_n$.
\begin{equation}
v_n=(v_0\in\VV)+(\delta^i_j\in\VV\otimes\VV^ {*\otimes})+\sum\limits_{i=2}^n(0\in\VV\otimes\VV^ {*i\otimes}),
\end{equation}
where $\delta^i_j$ is the Kronecker delta.

\subsection{Algebra over a field}\label{sec:Algebra}

 An algebra over a field is a vector space equipped with a bilinear product. Thus, an algebra is an algebraic structure which consists of a set, together with operations of multiplication, addition, and scalar multiplication by elements of the underlying field. \cite[p.~3]{Algebra}
 
We will construct the algebra by explicitly mimicking the application of a direct sum of operators $\sumd_n$ mapping $\dP_0\oplus\dP_0\to\dP_n$ to the maps of scalar multiplication, addition and a bilinear product. 
Their differentiable compositions are modeled by mimicking projections of the operator of program composition \cite[Theorem~5.3]{OperationalCalculus}
 \begin{equation}\label{eq:kom}
   \exp(\D_fe^{h\D_g}): \dP\to \dP\to\dP_\infty.
   \end{equation}
and fixing one of the mappings \cite[eq.~44,~45]{OperationalCalculus}. If the readers would prefer to implement the algebra through the application of operators to appropriate mappings
\footnote{Ex.: the map representing addition, or the bilinear product map.},
they can do so using the operators we construct in section \ref{sec:operators}.

\subsubsection{Vector space over a field $K$}\label{sec:vectorSpace}
 
As part of the construction of an algebra, we construct a vector space $\VV_n$ over a field $K$. A vector space is a collection of objects that can be added together and multiplied with elements of the underlying field $K$.

We begin with scalar multiplication.

\begin{lstlisting}
template<class K>
var var::operator*(K n)const{
    var out;
    out.id = this->id*n;
    for_each_copy(..., mul_make_pair<pair<var*, var> >, n);
    return out;
}

template<class K>
var var::operator/(K n)const {...};
\end{lstlisting}
Scalar multiplication and its convenient inverse employ the function \emph{for\_each\_copy}, applying the provided operation 
\begin{lstlisting}
template<class V, class K>
V mul_make_pair(V v, K n) {
    return std::make_pair(v.first, v.second * n);
}
\end{lstlisting}
to each one of the components of $this$ and storing the result in $out.dTau$.

Vector addition by component is implemented by 
\begin{lstlisting}
var var::operator+(const var& v)const{
    var out;
    out.id = this->id+v.id;
    merge_apply(..., sum_pairs<pair<var*, var> >);
    return out;
}
\end{lstlisting}
Vector addition by component employs the function \emph{merge\_apply}, applying the provided function \emph{sum\_pairs}
\begin{lstlisting}
template<class V>
T sum_pairs(V v1, V v2) {
    return std::make_pair(v1.first, v1.second + v2.second);
}
\end{lstlisting}
to corresponding components, storing the result in $out.dTau$.

\begin{theorem}
Class $var$ models a vector space over a field $K$.
\end{theorem}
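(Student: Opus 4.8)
The plan is to leverage the previously established fact that every instance of $var$ lies in $\VV_n$, together with the observation that $\VV_n=\VV\otimes T(\VV^*)$ is \emph{a priori} a vector space over $K$: it is assembled from the field-vector space $\VV$ by iterated direct sums and tensor products, each of which preserves vector-space structure. Consequently the substantive content is not that $\VV_n$ admits such a structure, but that the implemented operations \emph{operator*} and \emph{operator+} on $var$ compute precisely the canonical scalar multiplication and vector addition inherited from the decomposition $\VV_n=\VV_{n-1}\oplus(\VV_{n-1}\otimes\VV^*)$.

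First I would argue by induction on the order $n$. In the base case $n=0$ we have $\VV_0=\VV$, and a $var$ collapses to its \emph{id} field, on which multiplication and addition are exactly those of $\VV$, so the axioms hold by assumption. For the inductive step I would recall, from the preceding theorem, that $var=id\oplus dTau$ with $id\in\VV$ and $dTau\in\VV_{n-1}$, where each entry of $dTau$ is itself a lower-order $var$ indexed by a dual component. I would then check that \emph{operator*} acts componentwise: it multiplies \emph{id} by the scalar via the field action on $\VV$, and applies \emph{mul\_make\_pair} to every entry of $dTau$, scaling the $\VV_{n-1}$-valued second coordinate while preserving the dual index. By the induction hypothesis this is scalar multiplication on $\VV_{n-1}$, so the combined operation is scalar multiplication on $\VV_{n-1}\oplus(\VV_{n-1}\otimes\VV^*)=\VV_n$.

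Next I would treat addition analogously: \emph{operator+} adds the \emph{id} fields in $\VV$ and uses \emph{merge\_apply} with \emph{sum\_pairs} to add the $dTau$ maps entrywise. Once both operations are identified with the genuine direct-sum operations on $\VV_n$, all eight vector-space axioms transfer automatically from the componentwise structure, the additive identity being the zero $var$ (empty \emph{dTau}, vanishing \emph{id}) and the additive inverse being multiplication by the scalar $-1$.

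The hard part will be the sparse representation in the addition step. Because $dTau$ is a map keyed by \emph{var*} addresses, two summands typically carry different key sets, i.e.\ different collections of nonvanishing tensor components; I must therefore argue that \emph{merge\_apply} forms the union of the keys, combines coinciding keys via \emph{sum\_pairs}, and carries the remaining components over unchanged. The point is that a key absent from one operand corresponds to a zero component of $\VV_{n-1}\otimes\VV^*$, so "carry over unchanged" is exactly "add zero". Verifying that this sparse bookkeeping faithfully realizes addition in the full dense direct sum is the one place where the implementation could diverge from the mathematics, and hence is where the proof demands the most care.
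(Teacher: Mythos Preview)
Your argument is correct and considerably more careful than the paper's. The paper's proof is a single sentence: by the implementations of componentwise addition and scalar multiplication, the vector-space axioms are satisfied. It does not carry out an induction on the order, does not explicitly identify the operations with the canonical direct-sum operations on $\VV_n$, and does not discuss the sparse-map issue at all.

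What you do differently is supply the mechanism the paper leaves implicit: the induction on $n$ via the decomposition $var=id\oplus dTau$, the recursive identification of \emph{operator*} and \emph{operator+} with the direct-sum operations, and the observation that \emph{merge\_apply} must treat missing keys as zero components for the sparse representation to model addition in $\VV_{n-1}\otimes\VV^*$. Your approach buys actual verification rather than assertion, and it isolates the one genuinely nontrivial point (the sparse bookkeeping in \emph{merge\_apply}); the paper's approach buys brevity, at the cost of leaving exactly that point to the reader.
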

\begin{proof}
By implementations of addition by components and multiplication with a scalar $k\in K$, the axioms of the vector space are satisfied.
\end{proof}

\subsubsection{Algebra over a field $K$}\label{sec:algebra}

To construct an algebra over a field $K$, we equip the vector space $\VV_n$ with a bilinear product by components.

\begin{lstlisting}
var var::operator*(const var& v)const{
    var out;
    out.id = this->real*v.id;
    out.order = this->order<v.order?this->order:v.order;
    if(out.order>0){
        map<int, double> tmp1;
        map<int, double> tmp2;
        for_each_copy(..., mul_make_pair<pair<var*, var> >, 
          v.reduce());
        for_each_copy(..., mul_make_pair<pair<var*, var> >, 
          this->reduce());
        merge_apply(..., sum_pairs<pair<var*, var> >);
    }
    return out;
}
\end{lstlisting}
The employed functions have been explained at previously use. The function \emph{reduce} makes a shallow copy of $this$, while reducing the order of the returned copy (making it one time less differentiable). This bilinear product contains Leibniz rule within its structure, as the projection of the operator $\exp(\D_fe^{h\D_g})(g): \dP\to\dP_\infty(g)$ to the unit $n$-cube was applied to the algebra.

\begin{theorem}
Class $var$ models an algebra over a field $K$.
\end{theorem}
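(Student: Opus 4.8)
The plan is to show that the class $var$ satisfies the defining axioms of an algebra over a field $K$: namely, that it is a vector space over $K$ equipped with a bilinear product. The vector space structure has already been established in the preceding theorem, so the remaining work is to verify that the product implemented by \verb|var::operator*(const var& v)| is indeed bilinear, and that together with the vector space it forms an algebra. I would therefore treat the vector space axioms as given and focus entirely on the multiplicative structure.

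First I would recall that an algebra over $K$ is a vector space $\VV_n$ together with a product $\VV_n\times\VV_n\to\VV_n$ that is bilinear, i.e. linear in each argument separately:
\begin{equation}
(u+v)\cdot w = u\cdot w + v\cdot w,\quad u\cdot(v+w)=u\cdot v + u\cdot w,\quad (ku)\cdot v = k(u\cdot v) = u\cdot(kv).
\end{equation}
The second step is to read these conditions off the component-wise definition of the product. The implementation computes $out.id = this.id\cdot v.id$ on the scalar part, and builds $out.dTau$ by forming, on each component, the two cross terms $this.dTau\cdot v.reduce()$ and $v.dTau\cdot this.reduce()$ and summing the matching components. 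I would argue that this is exactly the component-wise realization of the projected operator $\exp(\D_f e^{h\D_g})(g):\dP\to\dP_\infty(g)$ evaluated on the unit $n$-cube, as noted in the text, so that the product carries Leibniz's rule in its structure by construction.

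The key verification, then, is that each piece of this construction respects addition and scalar multiplication in each argument. Bilinearity on the $id$ component is immediate since it is the underlying field multiplication. On the $dTau$ components, I would check that \emph{for\_each\_copy} with \verb|mul_make_pair| scales linearly (this was already used for scalar multiplication, whose linearity is inherited from the vector-space theorem) and that \emph{merge\_apply} with \verb|sum_pairs| adds componentwise; distributivity and scalar compatibility then follow from the corresponding properties established for $+$ and scalar $\cdot$ in the vector space. Combining the field-multiplication bilinearity on $id$ with the componentwise bilinearity on $dTau$ yields bilinearity of the full product on $\VV_n$, which completes the claim.

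The main obstacle I anticipate is the handling of the \emph{order} bookkeeping and the use of \emph{reduce}: because the product decrements differentiability order and mixes components of different orders, I must make sure the two cross terms land in matching components so that \emph{merge\_apply} combines them consistently, and that the result genuinely lies in $\VV_n$ (which is guaranteed by the earlier theorem $var\in\VV_n$ applied to $out$). Verifying that this order-aware mixing is still linear in each argument — rather than accidentally introducing cross-order terms that break linearity — is the delicate point; everything else reduces to the already-proven vector-space axioms.
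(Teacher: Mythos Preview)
Your proposal is correct and follows the same approach as the paper: take the vector-space structure as already established by the preceding theorem, and then observe that the componentwise product supplied by \texttt{var::operator*} is bilinear, so the algebra axioms hold. The paper's own proof is a single sentence to exactly this effect (``By the implementation of a bilinear product by components the axioms of an algebra over a field are satisfied''); your write-up simply unpacks that sentence in more detail than the paper bothers to.
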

\begin{proof}
By the implementation of a bilinear product by components the axioms of an algebra over a field are satisfied.
\end{proof}

We may now trivially implement the remaining operators existing in the algebra.

\begin{lstlisting}
var var::operator-(const var& v)const{
    return *this+(-1)*v;
}

var var::operator/(const var& v)const{
    return *this*(v^(-1));
}
\end{lstlisting}
Similarly, the following operators can be assumed to be generated by the existing algebra.

\begin{lstlisting}
var operator*(double n)const;
var operator+(double n)const;
var operator-(double n)const;
var operator/(double n)const;
var operator^(double n) const;
var& operator=(const var& v);
var& operator+=(const var& v);
var& operator-=(const var& v);
var& operator*=(const var& v);
var& operator/=(const var& v);
var& operator*=(double n);
var& operator/=(double n);
var& operator+=(double n);
var& operator-=(double n);
var operator*(double n, const var& v);
var operator+(double n, const var& v);
var operator-(double n, const var& v);
var operator/(double n, const var& v);
var operator^(double n, const var& v);
\end{lstlisting} 

Order logic is trivially implemented, by mapping
\begin{equation}
\VV.id\times\VV.id\to\{0,1\}.
\end{equation}
\begin{lstlisting}
bool operator==(const var& v)const;
bool operator!=(const var& v)const;
bool operator<(const var& v)const;
bool operator<=(const var& v)const;
bool operator>(const var& v)const;
bool operator>=(const var& v)const;
\end{lstlisting}

\section{Differentiable programming space}\label{sec:differentiableProgSpace}

With the algebra over $\VV_n$ implemented, we turn towards constructing a differentiable programming space, from the programming space
              
\begin{equation}\label{eq:cc}
\CC:\VV\to\VV.
\end{equation}

\begin{definition}[Differentiable programming space]\label{def:dP}
  A \emph{differentiable programming space} $\dP_0$ is any subspace of $\F_0:\VV\to\VV$ such that
  \begin{equation}\label{eq:P}
  \D\dP_0\subset\dP_0\otimes T(V^*)
  \end{equation}
  The space $\dP_n<\F_n:\VV\to\VV_n$, spanned by $\{\D^k\dP_0;\quad 0\le k\le n\}$ over $K$, is called a differentiable programming space of order $n$. When all elements of $\dP_0$ are analytic, we denote $\dP_0$ as an \emph{analytic programming space}. \cite[Definition~4.2]{OperationalCalculus}
 \end{definition}
              
\begin{theorem}[Infinite differentiability]\label{izr:P}
  Any differentiable programming space $\dP_0$ is an
  infinitely differentiable programming space, meaning that
  \begin{equation}\label{eq:P_n}
      \D^k\dP_0\subset\dP_0\otimes T(V^*)
    \end{equation}
for any $k\in\mathbb{N}$. \cite[Theorem~4.1]{OperationalCalculus}
\end{theorem}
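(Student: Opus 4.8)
The plan is to prove Theorem \ref{izr:P} by induction on $k$, using the defining property of a differentiable programming space from Definition \ref{def:dP} as the base case and showing that the closure condition propagates through the operator $\D$. The statement to establish is that $\D^k\dP_0\subset\dP_0\otimes T(\VV^*)$ for every $k\in\mathbb{N}$, so the natural strategy is to bootstrap from the hypothesis $\D\dP_0\subset\dP_0\otimes T(\VV^*)$ (equation \eqref{eq:P}), which handles the case $k=1$ directly, and then argue that applying $\D$ one more time preserves membership in $\dP_0\otimes T(\VV^*)$.

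First I would set up the induction: assume $\D^k\dP_0\subset\dP_0\otimes T(\VV^*)$ holds for some $k\ge 1$, and seek to show $\D^{k+1}\dP_0=\D(\D^k\dP_0)\subset\dP_0\otimes T(\VV^*)$. By the inductive hypothesis, an arbitrary element of $\D^k\dP_0$ lies in $\dP_0\otimes T(\VV^*)$, so it suffices to understand how $\D$ acts on a product of the form $p\otimes t$ with $p\in\dP_0$ and $t\in T(\VV^*)$. The key observation is that $\D$ distributes over the tensor product by a Leibniz-type rule: differentiating $p\otimes t$ produces a term $(\D p)\otimes t$ together with a term in which $p$ is paired against the derivative of the tensor factor. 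The first term lies in $(\dP_0\otimes T(\VV^*))\otimes T(\VV^*)$ by the base hypothesis $\D\dP_0\subset\dP_0\otimes T(\VV^*)$, while the second stays within $\dP_0\otimes T(\VV^*)$ since differentiating an element of the tensor algebra $T(\VV^*)$ simply raises its tensor degree, keeping it inside $T(\VV^*)$.

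The main obstacle, and the step requiring the most care, will be verifying that the tensor-algebra factors absorb correctly — that is, showing the nested tensor product $T(\VV^*)\otimes T(\VV^*)$ collapses back into $T(\VV^*)$ so that the right-hand side of the inclusion is genuinely $\dP_0\otimes T(\VV^*)$ and not some larger space. This is where the structure of $T(\VV^*)$ as a graded algebra closed under the relevant multiplication is essential: the concatenation (tensor multiplication) $\VV^{*m\otimes}\otimes\VV^{*l\otimes}\to\VV^{*(m+l)\otimes}$ keeps all the higher-order cotangent contributions inside a single copy of $T(\VV^*)$. I would make this precise by tracking tensor degrees explicitly and invoking the fact that $T(\VV^*)$ is closed under its own product, so that the Leibniz expansion of $\D^{k+1}$ always reassembles into $\dP_0\otimes T(\VV^*)$.

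Once the inductive step is secured, the theorem follows: the inclusion $\D^k\dP_0\subset\dP_0\otimes T(\VV^*)$ holds for all $k\in\mathbb{N}$, which is exactly the assertion of infinite differentiability, and in particular shows that a differentiable programming space of any finite order $\dP_n$ remains a subspace of $\dP_0\otimes T(\VV^*)$.
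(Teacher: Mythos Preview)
The paper does not supply its own proof of this theorem: it states the result and defers entirely to \cite[Theorem~4.1]{OperationalCalculus}, so there is no in-paper argument to compare against. Your inductive strategy is the natural one and is essentially correct. The base case is exactly the defining inclusion \eqref{eq:P}, and the inductive step reduces to showing $\D\big(\dP_0\otimes T(\VV^*)\big)\subset\dP_0\otimes T(\VV^*)$; since $\D p\in\dP_0\otimes T(\VV^*)$ for $p\in\dP_0$, one gets $(\D p)\otimes t\in\dP_0\otimes T(\VV^*)\otimes T(\VV^*)$, and the closure $T(\VV^*)\otimes T(\VV^*)\subset T(\VV^*)$ of the tensor algebra under its own product finishes the job. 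Your emphasis on this absorption step is exactly the right place to put the weight.

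One small overreach worth cleaning up: you invoke a Leibniz expansion with a second term coming from ``the derivative of the tensor factor,'' but in the algebraic tensor product $\dP_0\otimes T(\VV^*)$ a simple tensor is $p\otimes t$ with $t\in T(\VV^*)$ a \emph{fixed} multi-tensor, not a $\VV$-dependent object, so $\D(p\otimes t)=(\D p)\otimes t$ with no second term. This does not harm your argument---if anything it streamlines it---but you should not appeal to a contribution that is identically zero. With that adjustment, your proof stands on its own.
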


It follows from Theorem \ref{izr:P} that a differentiable programming space of order $n$, $\dP_n:\VV\to\VV\otimes T(\VV^*)$ can be embedded into the tensor product space of the programming space $\dP_0$ and the space $T_n(\VV^*)$ of multi-tensors of order less than equal $n$ \cite[Corollary~4.1.1]{OperationalCalculus}. For the specific case of \eqref{eq:cc} this entails
\begin{equation}
\DD^n\CC<\dP_n\iff\DD^n\CC\subset\CC\otimes T(\VV^*)
\end{equation}
which we ensure by providing closure of \eqref{eq:cc} under $\D$. Furthermore, this means that the tuple $(\VV,\dP_0)$
\footnote{In our case $\dP_0:=\CC$ \eqref{eq:cc}.}
 - together with the structure of the tensor algebra $T(\VV^*)$ - is sufficient for constructing arbitrary differentiable programming spaces $\dP_n$ .

\subsection{Operators}\label{sec:operators}

In this subsection we provide a simple implementation of $\sumd_n$ and its pullback through an arbitrary program. This is equivalent to the projection of the operator of program composition to the unit hyper-cube. As we have already constructed the algebra of the memory $\VV_n$, such an operator will enable us to construct differentiable programming spaces of arbitrary order.

The operator of tensor series expansion \cite[Theorem~5.1]{OperationalCalculus}

\begin{equation}\label{eq:specProg}
            e^{h\D}:\dP\to\VV\to \VV\otimes \T(\VV^*),
          \end{equation}

\begin{equation}
  e^{h\D}=\sum\limits_{n=0}^{\infty}\frac{(h\D)^n}{n!}
 \end{equation}
 is evaluated at $h=1$ and projected onto the unit $N$-cube, arriving at 
\begin{equation}\label{eq:DD}
    \sumd_N = 1+\D +\D^2 +\ldots + \D^N.
  \end{equation}
The expression \eqref{eq:DD} obeys the recursive relation
\begin{equation}
      \label{eq:rekurzija}
      \sumd_{k+1} = 1+\D\sumd_k,
    \end{equation}
which can be used for constructing programming spaces of arbitrary order. Note that only explicit knowledge of $\tau_1:\dP_0\to\dP_1$ is required \cite[Proposition~5.1]{OperationalCalculus}. We perform its pullback through an arbitrary program by mimicking the operator of program composition \cite[Theorem~5.2]{OperationalCalculus}
\begin{equation}\label{eq:opKompo}
  \exp(\D_fe^{h\D_g}): \dP\to\dP\to\dP_\infty,
  \end{equation}
partially evaluating on the second map and projecting it onto the unit $N$-cube.         

In our definition of the class \emph{tau}, we let \emph{K} model the image of the identity ($1$) and \emph{dTau} model the image of $\D\tau_{k-1}$ from the expression \eqref{eq:rekurzija} after application. The former is of the type $F:K\to K$, representing the mapping from the underlying field to itself, and the latter of type $dTau:\VV\to\VV_{k-1}$, representing its differentiable derivative.  
\begin{lstlisting}
template<class dTau, class K>
class tau
{
public:
    tau();
    tau(F mapping, dTau primitive);
    ~tau();
    var operator()(const var&v);
private:
    dTau primitive;
    F mapping;
};
\end{lstlisting}        
        
\begin{lstlisting}
var tau::operator()(const var&v){
    var out;
    out.idi = mapping(v.id);
    for_each_copy(..., mul_make_pair<std::pair<var*, var> >, 
      primitive(v.reduce()));
    return out;
}
\end{lstlisting}
Note that while the class \emph{tau} represents the operator \eqref{eq:rekurzija}, its instances represent the image of the operator \eqref{eq:rekurzija}, via its (\emph{tau}) evaluation operator \emph{()}.

\subsection{Differentiable Derivatives}\label{sec:orderReduction}
 
The ability to treat the $k$-th derivative of a program, as part of a different differentiable program appears useful in many fields. To that purpose, we must be able to threat the derivative itself as a differentiable program. This is easily implemented given our simplified construction, as we are working solely with the projections of the operators to the unit $N$-cube, which eases our troubles.

\begin{theorem}\label{izr:reductionMap}
There exists a reduction of order map $\phi:\dP_n\to \dP_{n-1}$, such that the
following  diagram commutes
\begin{equation}\label{eq:reductionMap}
\begin{tikzcd}
  \dP_n \arrow{r}{\phi} \arrow{d}{\D} & 
  \dP_{n-1} \arrow{d}{\D}\\
  \dP_{n+1} \arrow{r}{\phi} & 
  \dP_{n}
\end{tikzcd}
\end{equation}
satisfying
\begin{equation}
\forall_{P_1\in\dP_0}\exists_{P_2\in\dP_0}\Big(\phi^k\circ \sumd_n(P_1)=\sumd_{n-k}(P_2)\Big)
\end{equation}
for each $n\ge 1$.
\end{theorem}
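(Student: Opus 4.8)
The plan is to construct the reduction map $\phi$ explicitly and to read off all three assertions from a single structural fact: the programming space $\dP_n$ is graded by differentiation order. First I would record that, by Theorem~\ref{izr:P}, each $\D^k\dP_0$ is a well-defined subspace of maps $\VV\to\VV\otimes\VV^{*k\otimes}$, and that these land in distinct tensor-degree components of $\VV_n=\VV\otimes T(\VV^*)$. Hence the spanning description of $\dP_n$ refines to a direct sum
\begin{equation}
\dP_n=\bigoplus_{k=0}^{n}\D^k\dP_0,
\end{equation}
the summands being separated by the tensor degree of their image. I would then \emph{define} $\phi:\dP_n\to\dP_{n-1}$ as post-composition with the linear projection $\VV_n\to\VV_{n-1}$ that forgets the top component $\VV\otimes\VV^{*n\otimes}$ — precisely the codomain action of the \emph{reduce} operation of Section~\ref{sec:orderReduction}. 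Defining $\phi$ through the codomain projection, rather than through a choice of generators, makes it manifestly independent of how an element of $\dP_n$ is written as a combination of the $\D^kP$.

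Second, I would verify that $\phi$ has the claimed target and commutes with $\D$. Since the projection annihilates exactly the degree-$n$ summand and is the identity on all lower ones, $\phi$ sends $\bigoplus_{k=0}^n\D^k\dP_0$ onto $\bigoplus_{k=0}^{n-1}\D^k\dP_0=\dP_{n-1}$, so the map indeed lands in $\dP_{n-1}$. For the square \eqref{eq:reductionMap} it suffices, by $K$-linearity, to test the two composites on a homogeneous generator $\D^kP$ with $0\le k\le n$. Because $\D$ raises tensor degree by exactly one while $\phi$ deletes only the current top degree, both $\D\circ\phi$ and $\phi\circ\D$ send $\D^kP\mapsto\D^{k+1}P$ when $k<n$ and $\D^kP\mapsto 0$ when $k=n$; hence $\D\circ\phi=\phi\circ\D$ and the diagram commutes.

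Third, I would settle the relation with $\sumd$ by a direct telescoping. From \eqref{eq:DD} we have $\sumd_n=\sum_{k=0}^{n}\D^k$, so $\sumd_n(P_1)$ is a sum of homogeneous pieces of degrees $0,\dots,n$ and $\phi$ simply discards its top term:
\begin{equation}
\phi\big(\sumd_n(P_1)\big)=\sum_{k=0}^{n-1}\D^kP_1=\sumd_{n-1}(P_1).
\end{equation}
Iterating $\phi$ (which peels off one layer of the recursion \eqref{eq:rekurzija} at a time) gives $\phi^k\circ\sumd_n(P_1)=\sumd_{n-k}(P_1)$ for every $0\le k\le n$, so the existential is witnessed by the canonical choice $P_2=P_1\in\dP_0$.

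The computations above are routine once the grading is in place, so the only real content — and the step I expect to require care — is the structural preliminary: showing that order-reduction descends to a \emph{well-defined} map of the abstract programming spaces and that its image is exactly $\dP_{n-1}$. This rests on two facts that must be stated cleanly before anything else: that differentiation $\D$ is homogeneous of degree $+1$ for the grading by tensor power, and that the decomposition $\dP_n=\bigoplus_{k=0}^n\D^k\dP_0$ is a genuine direct sum, so that ``forget the top degree'' is unambiguous. Both follow from Theorem~\ref{izr:P} together with the identification $\VV_n=\VV\otimes T(\VV^*)$, after which commutativity of \eqref{eq:reductionMap} and the $\sumd$-relation are immediate.
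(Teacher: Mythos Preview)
The paper does not actually supply a proof of this theorem: after the statement it passes directly to Corollary~\ref{cor:extraxtDerivatives} and then remarks only that, in the restricted setting under consideration, ``our reduction of order map is a simple projection to the $(N-1)$-cube,'' followed by the code for \texttt{var::d}. Your construction of $\phi$ as post-composition with the codomain projection $\VV_n\to\VV_{n-1}$ that drops the top tensor-degree component is precisely that projection, so you are supplying the omitted argument in exactly the spirit the paper intends rather than taking a different route.

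The argument itself is sound. The check of the commuting square on homogeneous generators $\D^kP$ is correct (both composites yield $\D^{k+1}P$ for $k<n$ and $0$ for $k=n$), and the telescoping $\phi\circ\sumd_n=\sumd_{n-1}$, iterated to $\phi^k\circ\sumd_n(P_1)=\sumd_{n-k}(P_1)$ with witness $P_2=P_1$, is immediate from \eqref{eq:DD}. The one substantive preliminary you rightly isolate --- that $\dP_n=\bigoplus_{k=0}^n\D^k\dP_0$ is a genuine direct sum so that ``forget the top degree'' is well defined --- is also not proved in the paper; your justification via the ambient grading of $\VV\otimes T(\VV^*)$ by tensor power (the summand $\D^k\dP_0$ lands in $\VV\otimes\VV^{*k\otimes}$) is the appropriate one and suffices.
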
  
\begin{corollary}\label{cor:extraxtDerivatives}
By Theorem \ref{izr:reductionMap}, $n$-differentiable $k$-th derivatives of a program $P\in\dP_0$ can be extracted by
\begin{equation}
^{n}P^{k\prime}=\phi^k\circ \sumd_{n+k}(P)\in\dP_n
\end{equation}
\end{corollary}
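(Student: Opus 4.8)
The plan is to obtain Corollary~\ref{cor:extraxtDerivatives} as a direct instantiation of Theorem~\ref{izr:reductionMap}, and then to confirm that the object so produced is genuinely the $k$-th derivative rendered $n$-times differentiable. First I would apply the identity of Theorem~\ref{izr:reductionMap} with $P_1=P$, with the same reduction count $k$, and with its order parameter $n$ replaced by $n+k$; this is admissible because $n+k\ge 1$. The theorem then supplies a program $P_2\in\dP_0$ satisfying $\phi^k\circ\sumd_{n+k}(P)=\sumd_{(n+k)-k}(P_2)=\sumd_n(P_2)$, so the whole content of the corollary is encoded in this single substitution of indices.

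Next I would establish the membership claim. Since $\sumd_n = 1+\D+\cdots+\D^n$ by \eqref{eq:DD}, and since $\D^j\dP_0\subset\dP_0\otimes T(V^*)$ for every $j$ by Definition~\ref{def:dP} together with Theorem~\ref{izr:P}, the image $\sumd_n(P_2)$ lies in the span of $\{\D^j\dP_0 : 0\le j\le n\}$, which is by definition $\dP_n$. Combining this with the previous step yields $^{n}P^{k\prime}=\phi^k\circ\sumd_{n+k}(P)=\sumd_n(P_2)\in\dP_n$, which is exactly the asserted conclusion.

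To justify the name, I would trace the action of $\phi$ on the graded components of the jet. Writing $\sumd_{n+k}(P)=\sum_{j=0}^{n+k}\D^jP$ by iterating the recursion \eqref{eq:rekurzija}, the commuting square of Theorem~\ref{izr:reductionMap} tells us that $\D\circ\phi=\phi\circ\D$ and that $\phi$ lowers the order by one; iterating, $\phi^k$ discards the lowest $k$ components and realigns the remainder, so that $\D^kP$ becomes the leading (order-zero) term of an $n$-jet carrying $\D^kP,\D^{k+1}P,\ldots,\D^{n+k}P$. This is precisely the $n$-times-differentiable $k$-th derivative of $P$, and it confirms that the $\sumd_n(P_2)$ of the first step represents the intended analytic object rather than merely an arbitrary element of $\dP_n$.

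The genuinely delicate point — which is discharged by Theorem~\ref{izr:reductionMap} itself and which I therefore assume — is that after the shift the leading term $\D^kP$, a priori an element of $\dP_0\otimes T(V^*)$ rather than of $\dP_0$, is still realizable as $\sumd_n$ of an honest program $P_2\in\dP_0$; equivalently, that $\phi$ is well defined as a map landing back inside a differentiable programming space and not merely inside its tensor expansion. Granting the theorem, the only remaining work for the corollary is the index arithmetic $(n+k)-k=n$ together with the closure $\sumd_n(\dP_0)\subseteq\dP_n$, so I expect the proof to be short, with the main obstacle living entirely in the theorem it invokes.
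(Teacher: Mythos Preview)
Your proposal is correct and matches the paper's intent: the paper offers no explicit proof of the corollary, treating it as an immediate consequence of Theorem~\ref{izr:reductionMap} via the obvious index substitution $n\mapsto n+k$. Your write-up simply unpacks that substitution together with the membership $\sumd_n(\dP_0)\subseteq\dP_n$, which is exactly the content the paper leaves to the reader; the additional paragraph tracing the graded components under $\phi$ is a helpful sanity check but goes beyond what the paper itself supplies.
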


Given that we are only interested in computing (differentiable) derivatives of programs (i.e. automatic differentiation), and do not need the algebra of higher-order constructs the full treatment of operational calculus would afford us, we are working entirely on the unite $N$-cube. Hence, our reduction of order map is a simple projection to the $(N-1)$-cube.

We construct the reduction of order map $d:\dP_n\to\VV\to\dP_{n-1}$,
\begin{lstlisting}
var var::d(var* dvar){
    return (*this -> dTau.get())[dvar];
}
\end{lstlisting}
returning an $(n-1)$-differentiable derivative with respect to $v_i$. It nature is further explained by
\begin{equation}
v=v.id+\sum\limits_{\forall_i}v.d(\&v_i)\otimes dv_i \in\VV_n\implies v.d(\&v_i)\in\VV_{n-1},
\end{equation}
expressing $v\in\VV_n$ in terms of it.

\subsection{Example}

We conclude by employing the constructed operators and the tensor algebra of the virtual memory space to implement a minimal differentiable programming space \emph{dCpp}.

\begin{lstlisting}
namespace dCpp{
    var sin(const var& v);
    tau cos;
    tau e;
    tau ln;
    var cos_primitive(const var& v);
    var ln_primitive(const var& v);
    var e_primitive(const var& v);
}
\end{lstlisting}

\noindent We construct the map $sin$ explicitly for educational purposes.

\begin{lstlisting}
var dCpp::sin(const var& v){
    var out;
    out.id = std::sin(v.id);
    out.order = v.order;
    if(v.order>0){
    	for_each_copy(..., mul_make_pair<std::pair<var*, var> >,
    		dCpp::cos(v.reduce()));
    }
    return out;
}
\end{lstlisting}
Other maps are constructed by employing the operator \eqref{eq:rekurzija}.
\begin{lstlisting}
typedef var (*dTau)(var);
typedef double (*F)(double);

var dCpp::cos_primitive(const &var v){
    return (-1)*dCpp::sin(v);
}

dCpp::cos = tau<dTau, F>(std::cos, dCpp::cos_primitive);

var dCpp::ln_primitive(const &var v){
    return 1/v;
}

dCpp::ln = tau<dTau, F>(std::ln, dCpp::ln_primitive);

dCpp::e_primitive(const &var v){
    return dCpp:e(v);
}

tau dCpp::e = tau<dTau, F>(std::e, dCpp::e_primitive);

\end{lstlisting}

\section{Conclusions}

In this work we provided a simple implementation of automatic differentiation, where the steps taken were explained in the language of tensor and operational calculus. It aimed to serve as a gentle introduction into tensor and operational calculus for those already familiar with automatic differentiation.
We hope this introduction motivates the practitioner of differentiable programming to look into further uses of operational calculus, that can accommodate more than just a description of automatic differentiation presented in this work.

\printbibliography
\end{document}